\documentclass[num-refs]{wiley-article}
\usepackage[T1]{fontenc}
\usepackage[utf8]{inputenc}

\usepackage{booktabs}
\usepackage{siunitx}
\usepackage[expanded]{mfirstuc}
\newcommand{\charquote}[1]{%
  \ifmmode
\mathopen{\text{`}}\mathtt{#1}\mathclose{\text{'}}%
  \else
    `\texttt{#1}'%
  \fi
}

\DeclareMathOperator{\division}{division}
\DeclareMathOperator{\remainder}{remainder}
\DeclareMathOperator{\floor}{floor}

\DeclareSIUnit{\instruction}{instruction}
\DeclareSIUnit{\float}{float}
\DeclareSIUnit{\characters}{characters}
\DeclareSIUnit{\Mfps}{\mega\float\per\second}
\usepackage{microtype}
\usepackage{lipsum}
\PassOptionsToPackage{hyphens}{url}
\usepackage{hyperref}
\usepackage{multirow}
\usepackage{xcolor}

\newcommand{\ifmakernel}{SIMD algorithm}
\hypersetup{
    colorlinks,
    linkcolor={red!50!black},
    citecolor={blue!50!black},
    urlcolor={black}
}
\usepackage{tikz}
\usetikzlibrary{shapes,arrows,arrows.meta,positioning,fit,calc}

\usepackage{pgfplots}
\usepackage{pgfplotstable}
\usepgfplotslibrary{groupplots}
\pgfplotsset{compat=1.17}
\usepackage{subcaption}
\usepackage{textcomp}

\usepackage{listings}
\definecolor{comments}{HTML}{CA60C2}
\definecolor{keywords}{HTML}{1C21B2}
\definecolor{strings}{HTML}{D02929}
\usepackage{algorithm}
\usepackage[noend]{algpseudocode}

\usepackage[utf8]{inputenc}

\newcommand{\divop}{\mathbin{\mathrm{div}}}

\title{Converting an Integer to a Decimal String in Under Two Nanoseconds}

\author[1]{Jaël Champagne Gareau}
\author[1]{Daniel Lemire}

\corraddress{Daniel Lemire, Université du Québec (TELUQ), Montreal, Quebec, H2S 3L5, Canada}
\corremail{daniel.lemire@teluq.ca}

\fundinginfo{Natural Sciences and Engineering Research Council of Canada, Grant Number: RGPIN-2024-03787\\Fonds de recherche du Québec, \url{https://doi.org/10.69777/361128}}

\affil[1]{Universit\'e du Qu\'ebec (TELUQ), Montreal, Quebec, H2S 3L5, Canada}
\date{}

\begin{document}
\maketitle

\begin{abstract}

Converting binary integers to variable-length decimal strings is a fundamental
operation in computing. Conventional fast approaches rely on recursive division
and small lookup tables. We propose a SIMD-based algorithm that leverages
integer multiply-add
instructions available on recent AMD and Intel processors. Our
method eliminates lookup tables entirely and computes multiple quotients and
remainders in parallel.
Additionally, we introduce a dual-variant design with dynamic
selection that adapts to input characteristics: a branch-heavy variant optimized
for homogeneous digit-length distributions and a branch-light variant for
heterogeneous datasets. Our single-core algorithm consistently outperforms all
competing methods across the full range of integer sizes, running
1.4--2$\times$ faster than the closest competitor and 2--4$\times$ faster than
the C++ standard library function \texttt{std::to\_chars} across tested workloads.

\keywords{Integer numbers, String algorithms, Performance benchmarking}
\end{abstract}

\section{Introduction}

Converting binary integers into their decimal string representations is a
fundamental operation in software systems. It appears in virtually all
applications---whether for displaying values in command line or graphical user
interfaces, writing diagnostic logs, or serializing data to text-based formats
such as JSON, XML, or CSV\@. Despite its ubiquity, this conversion task has
received relatively little attention in the
formal peer-reviewed literature, and common
implementations may leave room for performance improvements. As data formats and
logging systems become performance bottlenecks in modern applications, improving
integer-to-string conversion speed directly benefits real-world software such as
database engines, serialization libraries, and logging frameworks.

Modern processors provide several architectural features that could potentially
be exploited to improve the performance of such conversions. SIMD (Single
Instruction, Multiple Data) extensions, for example, allow the same arithmetic
operation to be applied to multiple data elements in parallel. On recent x86-64
processors from Intel and AMD, the AVX-512 instruction sets can process 512-bit
registers using single instructions, offering potential for accelerating
numerical workloads~\cite{clausecker2023transcoding,mula2020base64}. At the
same time, processors can execute  multiple independent instructions
simultaneously (\emph{instruction-level parallelism}) including load and store
operations. To improve such parallelism, contemporary out-of-order execution
engines can reorder instructions and use speculative execution to hide latencies
and improve throughput. However, speculative execution might suffer from branch
mispredictions. While branch predictors have become increasingly sophisticated,
mispredicted branches  still introduce performance penalties. These
architectural factors interact in subtle ways when implementing high-performance
integer-to-string conversions. Achieving both low latency and high throughput
requires careful coordination between algorithmic structure and processor
microarchitecture, including vector width utilization, instruction scheduling,
and branch prediction behavior.

Our \emph{primary contribution} is a new integer-to-string conversion algorithm
that explicitly leverages SIMD parallelism. Our design exploits the AVX-512
instruction set to process multiple digits concurrently, computing quotients and
remainders in parallel without costly integer divisions. Furthermore, a
distinctive feature of our approach is its dual-variant mechanism, which
dynamically selects between two specialized conversion paths based on sampled
input characteristics: a branch-heavy variant optimized for homogeneous
digit-length distributions and a branch-light variant for heterogeneous
datasets. This adaptive design balances branch prediction accuracy with
instruction-level parallelism.

As a \emph{secondary contribution}, we present, to the best of our knowledge,
the first formal empirical study of integer-to-string conversion performance
on a modern AVX-512-capable processor. Our evaluation combines both randomized
datasets---with controllable digit-length distributions---and data extracted
from real-world files. In particular, we analyze how the distribution of digit
lengths within the input affects throughput, revealing that this factor alone
dominates overall performance. We compare our approach against widely used
library implementations, and our results highlight where each variant excels and
how architectural features influence observed performance.

The remainder of this paper is organized as follows. Section~\ref{sec:related}
reviews prior work and commonly used algorithms, including those found in
standard libraries. Section~\ref{sec:simd} provides a brief overview of SIMD
instructions and the AVX-512 extension relevant to our algorithm.
Section~\ref{sec:math_results} presents the mathematical foundation for our
approach. Section~\ref{sec:methods} describes the design of our proposed
algorithm, with emphasis on its SIMD data layout, parallel digit extraction, and
dynamic variant-selection mechanism. Section~\ref{sec:experiments} describes the
benchmarking methodology, hardware platform, and performance metrics. Finally,
Section~\ref{sec:conclusion} discusses the implications of our results and
outlines possible extensions.

\section{Related Work}%
\label{sec:related}

Formally, the problem consists of converting an integer $n$ into its decimal
representation using the characters $\mathrm{-}$, $\mathrm{0}$, $\ldots$, and
$\mathrm{9}$. In Unicode and ASCII, the digit characters have code point values
ranging from 48 (for $\mathrm{0}$) to 57 (for $\mathrm{9}$) consecutively. Thus,
to convert an integer value $k$ between 0 and 9 to a string, it suffices to
generate the character with the code point value $\mathrm{\charquote{0}} + k$,
where \charquote{0} is a shorthand for the code point value 48. In the general
case, and ignoring negative values, we seek a sequence of characters $c_0 c_1 \ldots c_{k-1}$
such that $n = \sum_{i=0}^{k-1} (c_i - \charquote{0}) \cdot 10^{k-1-i}$, where
$k$ is the number of digits in the representation of $n$.\footnote{The number of
decimal digits needed to represent a positive integer $n$ is $\lfloor \log_{10}
n \rfloor +1$.} In practice, most algorithms first check whether $n$ is
negative; in which case, they prepend a minus sign (\charquote{-}) and proceed
with the absolute value. For simplicity, we focus principally on unsigned
integers and, more precisely, on 64-bit strictly positive values
(i.e., $n \in [1, 2^{64}-1]$).

\begin{figure}[htb]
  \centering
  \begin{tabular}{l}
    \begin{lstlisting}
size_t uint64_to_str_naive(uint64_t n, char *buffer) {
    char *p = buffer;
    while (n > 0) {
        const uint64_t digit = n % 10;
        n /= 10;
        *p++ = '0' + digit;
    }
    std::reverse(buffer, p);
    return p - buffer; // Return length
}
    \end{lstlisting}
  \end{tabular}\restartrowcolors
  \caption{C++ implementation of the classic integer-to-string conversion algorithm~\cite{kernighan1988c}.}%
  \label{fig:naive_listing}
\end{figure}

\paragraph*{Naive and Baseline Methods.}

The classic approach repeatedly divides and takes the modulus by~10 to extract
digits from least to most significant, as illustrated in
Figure~\ref{fig:naive_listing}. Because digits are produced in reverse order, a
final reversal step is required~\cite{kernighan1988c}. Some variants instead write the number
right-to-left into a buffer and then copy the resulting string to its correct position.
More advanced implementations precompute the number of digits
beforehand~\cite{lemire_counting_2025} to place each character directly at its
final position, avoiding both reversal and data movement, thereby enabling a
single-pass implementation.

\paragraph*{Two-Digit Lookup Methods.}

A common optimization is to reduce the number of division and modulus
operations. Instead of extracting one digit per iteration, pairs of digits are
produced by dividing by~100 and mapping the result (0--99) through a small
lookup table of two-character strings. The final odd digit, if any, is handled
separately. This strategy was popularized by
Alexandrescu~\cite{alexandrescu_facebook} and adopted in several
high-performance formatting libraries, such as
\texttt{fmt}.\footnote{\url{https://github.com/fmtlib/fmt/blob/9395ef5fcb81/include/fmt/format.h\#L1219}}
It removes many divisions and relies on cache-friendly, sequential memory
accesses, which are efficiently handled by modern CPUs.


\paragraph*{Division by Constants.}
The conversion of integers to decimal digits often involves divisions by 10 or
powers of two. Division instructions on mainstream computers are typically
slower than multiplications.\footnote{On recent AMD processors with a Zen~5
microarchitecture, a division instruction might require between 11 and 19~cycles
while a multiplication requires only 3~cycles~\cite{Fog2025InstructionTables}.}
For this reason, optimizing compilers replace divisions by \emph{multiplicative
divisions}: a multiplication followed by a
shift~\cite{jacobsohn1973combinatoric,Artzy:1976:FDT:359997.360013,Li:1985:FCD:4135.4990,Magenheimer:1987:IMD:36177.36189,Robison:2005:NUD:1078021.1078059,Granlund:1994:DII:773473.178249,Cavagnino:2008:EAI:1388169.1388172}.
The algorithm used by optimizing compilers usually follows Warren's
approach~\cite{warr:hackers-delight-2e}. We can concisely summarize the
mathematical results with optimal bounds~\cite{lemire2021integer}. Define
$\division(x, y) \coloneqq \floor(x / y)$ and $\remainder(x, y) \coloneqq x -\division(x, y) \cdot y$
for positive real numbers $x, y$ with the constraint that $y\neq 0$. Given a
non-zero divisor $d$ and a non-negative numerator, we have that
$\division(n, d) = \division(c \cdot n, m)$ for all $n \in [0, N]$ if
$1 / d \leq c / m < \left(1 + \frac{1}{N - \remainder(N + 1, d)} \right) 1 / d$,
where $c$ and $m$ are non-negative integers. We can require that $c \in [0, m)$.
By choosing $m$ to be a power of two, the formula $\division(n, d) = \division(c \cdot n, m)$
indicates that we can replace a division by an arbitrary positive integer $d$ by
a multiplication ($c \times d$) followed by a shift (division by $m$).
Fig.~\ref{fig:formula} illustrates how we might compute $c$ in practice.

\begin{figure}[htb]
  \centering
  \begin{tabular}{l}
    \begin{lstlisting}[language=Python]
def find(N, d):
    L = 1
    while True:
        m = 2 ** L
        c = (m + d - 1) // d
        ra = N - (N + 1) % d
        if c * d *  ra < m * (ra + 1):
            return L, c
        L += 1
\end{lstlisting}
  \end{tabular}\restartrowcolors
  \caption{Python script to compute the constant for a multiplicative division.
  It finds $L$ and $c$ such that $\lfloor n / d \rfloor = \lfloor (c \times n) /
2^L \rfloor$ for all $n \in [0,N]$.}%
  \label{fig:formula}
\end{figure}

\paragraph*{Large-Chunk Precomputed Methods.}

The same idea can be extended to larger digit groups, most notably 4-digit and
5-digit blocks. However, table size grows exponentially with block width (40~KB
for a 4-digit table), often exceeding the L1 data cache. Consequently, such
approaches trade arithmetic reduction for cache locality. Examples include
Chateauneu's 4-digit \texttt{itoa} implementation~\cite{chateauneu_itoa_sf2007}
and Henriksen's 5-digit variant~\cite{henriksen2013} derived from the GNU
\texttt{roff} project's \texttt{itoa} implementation. The \texttt{fmt}
contributors later implemented a more aggressive multi-table algorithm
originally proposed by user ``u2985907''~\cite{u2985907_impl}, which heavily
unrolls loops and merges large precomputed chunks to minimize branches and
divisions---trading instruction count for instruction-cache pressure. Hopman's
``\texttt{hopman\_fast}'' algorithm~\cite{hopman2010itoa} uses a similar 4-digit
strategy but attaches a small ``leading-zero'' flag to each chunk, allowing
zero-padding to be skipped quickly when combining ASCII characters. For example,
for the number \num{112345678}, you get the three chunks 0001, 1234, 5678
corresponding to $ n / 10 ^8$, $(n /10 ^4) \bmod 10^4$, and $n \bmod 10^4 $
where $n = 112345678$. You end up with many leading zero bytes in the higher
chunks that must not appear in the final string. Hopman's solution embeds the
number of leading zeroes inside the highest byte of the 4-byte chunk. In
contrast, Henriksen's version operates on 5-digit blocks and maintains separate
forward and reverse tables of strings. The forward tables contain the strings
corresponding to integers 0 to $10^5-1$ inclusively, and from $10^5-1$ to 0
inclusively for the reverse tables. The forward table is used with positive
integers: we lookup the string starting from the beginning of the table. The
reverse table is used for negative integers: starting from the end of the table,
we use a negative index in the table to lookup the string.

\paragraph*{Arithmetic and Fixed-Point Methods.}

Mathisen proposed a fully arithmetic approach that avoids lookup tables
altogether~\cite{mathisen1999fastitoa}. The method partitions a 32-bit integer
into two five-digit segments, processed concurrently by exploiting
instruction-level parallelism. Each sub-conversion uses a 4.28 fixed-point
representation (4 bits integer, 28 bits fractional) in which digits are
extracted via bit shifts and successive multiplications by ten. A small bias
term corrects rounding errors in the reciprocal of 10\,000. By interleaving the
two halves and unrolling the loop, Mathisen's method achieved sub-hundred-cycle
conversion times on late-1990s CPUs without divisions or tables. A later Usenet
post~\cite{mathisen2015convertingNumbersToText} by Mathisen mentioned the
possibility of a 64-bit extension but did not provide a working implementation.

Building on similar fixed-point principles, Hopman's ``fun''
algorithm~\cite{hopman2010itoa} follows a related philosophy:
it recursively splits the number into 8-, 4-, and 2-digit blocks using
reciprocal-multiplication constants approximating division by powers of ten.
Each stage decomposes the value into higher and lower blocks until every block
represents a single digit. Afterward, all digits are converted to ASCII in bulk
via a single \emph{bitwise OR} with \texttt{0x30}, and leading zeros are skipped branchlessly.
This design also exploits instruction-level parallelism and fixed-point
arithmetic to achieve high throughput without tables or divisions.

The AppNexus Common Framework's
implementation\footnote{\url{https://github.com/appnexus/acf/blob/5a4645d/src/an_itoa.c}}
of integer-to-string conversion~\cite{khuong_how_2017} follows a conceptually similar strategy. It
recursively splits the number into base-$10^k$ chunks (with $k \in \{4,8\}$)
using fixed-point reciprocals to compute both quotient and remainder without
divisions.

A related production-grade implementation is Abseil's \texttt{FastIntToBuffer}
routine.\footnote{\url{https://github.com/abseil/abseil-cpp/blob/d97663e/absl/strings/numbers.h\#L204}}
For 32- and 64-bit integers, it decomposes the value into blocks of up to eight
digits, then uses fixed-point reciprocals of $10$, $10^2$, and $10^4$ to compute
decimal digits in parallel inside 32- or 64-bit registers. The resulting bytes
are turned into ASCII by adding packed ``zero'' patterns and storing them with
unaligned 16/32/64-bit writes, so no per-digit or per-block digit table is needed.

\paragraph*{Hybrid Approaches.}

Several recent integer-to-string conversion routines combine elements from both
the lookup-table and arithmetic/fixed-point families, yielding hybrid methods
that avoid costly divisions while keeping table sizes small and cache-friendly.

A representative example is \texttt{jeaiii}'s
algorithm,\footnote{\url{https://github.com/jeaiii/itoa/blob/c861d1c/include/itoa/jeaiii_to_text.h}}
which decomposes the input integer into blocks whose sizes depend on the value
range. Each block is converted using a sequence of reciprocal-multiplication
steps based on fixed-point constants,
followed by two-digit lookups into a compact 200-byte table. The
algorithm aggressively unrolls these operations and relies on a mixture of
masked multiplications, shifts, and lookup pairs to extract multiple digits at once.

A further example of this hybrid style is Yaoyuan's \texttt{itoa\_yy}
implementation.\footnote{\url{https://github.com/ibireme/c_numconv_benchmark/blob/8541662/src/itoa/itoa_yy.c}}
It uses a compact 200-byte lookup table for all two-digit combinations (00--99),
while performing all higher-level decompositions purely via fixed-point
arithmetic. For 32-bit integers, the algorithm selects specialized code paths
for 1--2, 3--4, 5--6, 7--8, and 9--10 digits, each using carefully chosen constants
to implement divisions by $10$, $10^2$, $10^4$, and $10^8$ through
multiplication and bit shifts. The 64-bit path recursively splits the input into
4- and 8-digit blocks, reusing the same two-digit table to materialize all
decimal pairs. There is also a ``large lookup-table'' variant that uses 50 kB
of tables. Both variants are shown to be among the fastest scalar routines
reported in informal benchmarks.\footnote{\url{https://github.com/ibireme/c_numconv_benchmark}}

\paragraph*{SIMD Methods.}

The emergence of SIMD instruction sets has enabled digit extraction to be performed
on multiple integers in parallel. Mu\l{}a~\cite{mula_sse_2011} demonstrated two
SSE2-based routines: the first (originally designed by Wyderski and
adapted by Mu\l{}a) converts two 8-digit 32-bit integers concurrently, while the
second processes a single integer with lower latency. Both use reciprocal
multiplication and vector arithmetic but are limited to 32-bit inputs.

Later experiments by the StackOverflow user
``icecreamsword''~\cite{icecreamsword2015} attempted to vectorize Mathisen's
algorithm using SSE4.1 by splitting numbers into uniform blocks (e.g., three
groups of four digits). The SIMD version achieved roughly a 20--30\% improvement
(34--42~cycles vs.\ 44--50~cycles for the scalar assembly version),
demonstrating modest but consistent benefits from parallelization.

\paragraph*{Implementations in C++ Standard Libraries.}

The C++17 function \texttt{std::to\_chars} provides a standardized low-level
integer-to-string conversion routine whose implementations differ across
libraries and may evolve over time. At the time of writing, the GNU
\texttt{libstdc++} version adopts the same two-digit-table strategy described
above.\footnote{\url{https://gcc.gnu.org/onlinedocs/gcc-15.2.0/libstdc++/api/a00644_source.html\#l00083}}
LLVM's \texttt{libc++} implementation likewise employs a two-digit lookup table
but, for 64-bit values, divides the number by $10^{10}$: it formats the quotient
($<2^{64}/10^{10} < 2^{31}$) when it is non-zero using the 32-bit routine and
then appends the low ten digits 
through an \texttt{append10} helper (which computes ten~digits). This
hierarchical reuse of the smaller routine yields a branch-light and uniform inner
loop.\footnote{\url{https://github.com/llvm/llvm-project/blob/c2b2a347bf94/libcxx/include/__charconv/to_chars_base_10.h}}
By contrast, Microsoft's \texttt{MSVC} implementation retains the classic
repeated-division method shown in
Figure~\ref{fig:naive_listing}.\footnote{\url{https://github.com/microsoft/STL/blob/38d7248/stl/inc/charconv}}
The source code explicitly notes that ``Ry{\=u}'s digit table should be faster
here,'' referring to the lookup table used in the Ry{\=u} float-to-string
conversion algorithm~\cite{ryu2018}, which is essentially the same two-digit
table employed by the two-digit lookup approaches discussed
earlier.\footnote{\url{https://github.com/ulfjack/ryu/blob/023ee6b/ryu/digit_table.h}}

\section{SIMD instructions and AVX-512}%
\label{sec:simd}

Modern processors provide SIMD extensions that apply the same operation to
several data elements packed within wide registers. By contrast with scalar
instructions, which process one value per operation, SIMD instructions exploit
data-level parallelism. Almost all commodity processors today support some form
of SIMD instructions. A SIMD register spanning, say, 64~bytes, can represent
64~bytes, 32~shorts (16-bit words), 16~integers (32-bit words), or 8~long
integers (64-bit words). Similarly, floating-point numbers are supported.

On the x86-64 architecture, SIMD has evolved through several generations. SSE
and SSE2 (128-bit vectors) provided the initial foundation for packed integer
and floating-point arithmetic. AVX (2008) and AVX2 (2013) doubled the register
width to 256~bits and added support for fused multiply-add (FMA) operations as
well as richer integer and gather/scatter instructions. AVX-512, introduced by
Intel starting with the Knights Landing Xeon Phi (2016) and later extended to
mainstream Core and Xeon processors (Skylake-X, Ice Lake, Sapphire Rapids,
Granite Rapids, and equivalents from AMD starting with Zen~4), further widens
the vectors to 512~bits and increases the number of architectural vector
registers from 16 to 32. AVX-512 also introduces an EVEX prefix that enables
operation masking with dedicated mask registers. It becomes possible, for
example, to write, load or multiply only some of the elements of a SIMD
registers. A mask is conceptually just a word made of 8~bits, 16~bits, 32~bits
or 64~bits---corresponding to the number of elements represented in the register.

AVX-512 comprises multiple optional subsets, not all of which are supported on
every processor that implements AVX-512. The foundational AVX-512F subset
provides the core 512-bit floating-point and integer arithmetic. Other important
extensions include AVX-512VL (vector-length orthogonality, allowing most
instructions to operate on 128-bit or 256-bit vectors), AVX-512DQ/BW (32/64-bit
and 8/16-bit integer operations), and AVX-512CD (conflict detection for sparse
gather operations). The subset most relevant to our work is AVX-512IFMA (Integer
Fused Multiply-Add), which adds the instructions \texttt{vpmadd52luq} and
\texttt{vpmadd52huq}. These instructions perform a 52-bit $\times$ 52-bit
unsigned multiplication on each of the eight 64-bit lanes of a 512-bit register,
producing a 104-bit product per lane. The \texttt{vpmadd52luq} variant writes
the low 52~bits of each product (plus an addend) back to the destination, while
the \texttt{vpmadd52huq} variant writes the high 52 bits. When programming, we
can invoke these instructions using \emph{intrinsics} such as
\texttt{\_mm512\_madd52lo\_epu64} and \texttt{\_mm512\_madd52hi\_epu64}. SIMD
intrinsics are special functions giving the programmer access to SIMD
instructions. AVX-512IFMA is available on AMD processors starting with the Zen~4
microarchitecture (2022) and on Intel server processors starting with the Ice
Lake microarchitecture (2019).

The 52-bit precision matches well with decimal computations. For example, to
compute a high-accuracy approximation of $n / 10^k$ or $n \bmod 10^k$ for $k \le 8$,
one can multiply $n$ by a carefully chosen reciprocal constant near $2^{52} / 10^k$
and then extract either the high or low half of the product. The fused nature of
the instruction (multiply + add in one operation) reduces latency and
instruction count compared with separate multiply and add steps.
Furthermore, the \texttt{vpmadd52luq} and \texttt{vpmadd52huq} instructions are
relatively inexpensive: on recent AMD processors with a Zen~5 microarchitecture,
they execute in only four cycles, and two can be retired per
cycle~\cite{Fog2025InstructionTables}. These properties make AVX-512IFMA
particularly attractive for fast decimal digit extraction without lookup tables
or scalar divisions.

\section{Mathematical Result}%
\label{sec:math_results}

Lemire et al.~\cite{lemire_faster_2019} introduced a fast method for computing
remainders using multiplicative inverses. This approach allows us to compute the
remainder of an integer division by a power of ten using only multiplications
and bit shifts We adapt the result to our specific needs and provide a proof of
the following theorem, which is a key component of our algorithm. We state the
main result after recalling the following preliminary lemma from Lemire et
al.~\cite{lemire2021integer}.
Recall that we define
$\division(x, y) \coloneqq \floor(x / y)$ and $\remainder(x, y) \coloneqq x -\division(x, y) \cdot y$
for positive real numbers $x, y$ with $y\neq 0$.

\begin{lemma}\label{lemma:buzz2}
  Consider a positive integer $d > 0$, a non-negative integer $n$, and a
  non-negative real number $x$. We have $\remainder(n,d) = \floor(\remainder(x,d))$
  and $\division(n,d) = \division(x,d)$ if and only if $n \leq x < n + 1$.
\end{lemma}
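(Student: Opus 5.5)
Write $q = \division(n,d)$ and $r = \remainder(n,d)$, so that $n = qd + r$ with $q$ a non-negative integer and $r \in \{0,\ldots,d-1\}$. Since $x \geq 0$ and $d>0$, we also have $\division(x,d)=\floor(x/d)$ and $\remainder(x,d) = x - \division(x,d)\cdot d \in [0,d)$. I would prove each direction separately, working only with these two decompositions.

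\emph{Sufficiency.} Assume $n \le x < n+1$ and write $x = n + \epsilon$ with $\epsilon \in [0,1)$. Then $x = qd + (r+\epsilon)$. Because $r \le d-1$, we get $0 \le r+\epsilon < d$, so $qd \le x < (q+1)d$. Dividing by $d$ gives $q \le x/d < q+1$, hence $\division(x,d)=q=\division(n,d)$. It follows that $\remainder(x,d) = x - qd = r+\epsilon$. Since $r$ is an integer and $\epsilon\in[0,1)$, we have $\floor(r+\epsilon)=r=\remainder(n,d)$.

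\emph{Necessity.} Assume $\division(x,d)=q$ and $\floor(\remainder(x,d))=r$. Let $s=\remainder(x,d)$. The first equality gives $x = qd + s$, and the second gives $r \le s < r+1$. Adding $qd$ throughout yields $n = qd+r \le x < qd+r+1 = n+1$, which is the claim.

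No step is a genuine obstacle. The only place needing care is the sufficiency bound $r+\epsilon<d$: it uses the integrality of $r$, namely $r\le d-1$, together with $\epsilon<1$. This is exactly why the window is the half-open interval $[n,n+1)$. I would also note explicitly that $d$ is an integer and both sides are non-negative, so that $\floor$ behaves as ordinary integer truncation.
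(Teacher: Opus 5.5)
Your proof is correct in both directions. Sufficiency rests on $0 \le r+\epsilon < d$, which uses $r \le d-1$ and $\epsilon<1$; necessity follows directly from $x = qd+s$ with $r \le s < r+1$.

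The paper gives no proof of this lemma: it recalls the result from Lemire et al., so there is no in-paper argument to compare with, and your decomposition supplies the omitted proof in a self-contained way. The proof of Theorem~\ref{theorem:main} invokes only the ``if'' direction, which is your sufficiency argument.
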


By substituting a power of two for $m$ and 10 for $d$, the following theorem gives us a formula to compute any digit of the integer $n$ with only two multiplications---when not counting multiplications by a power of two as they can be implemented as a binary shift.

\begin{theorem}\label{theorem:main}
  Consider a base $d \geq 2$, an exponent $k \geq 1$ (e.g., $d = 10$ and $k \in [1, 8]$ for decimal digit extraction),
  an upper bound $N$, and positive integers $c$ and $m$. We have
  \(
    \remainder(\division(n, d^{k-1}), d) = \division(\remainder(c \cdot n + c, m) \cdot d, m)
  \)
  for all integers $n \in [0, N]$ if
  \[
    \left(1 - \frac{1}{N+1}\right) \frac{1}{d^k} \leq \frac{c}{m} < \frac{1}{d^k}.
  \]
\end{theorem}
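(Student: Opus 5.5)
The proof reduces the claim to Lemma~\ref{lemma:buzz2}, applied with divisor $d^k$ and real argument $x \coloneqq d^k\, c(n+1)/m$. Set $y \coloneqq c(n+1)/m$. The first step is to show that the hypothesis on $c/m$ gives the sandwich
\[
  \frac{n}{d^k} \leq y < \frac{n+1}{d^k}
\]
for every integer $n\in[0,N]$.

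\emph{Upper bound.} Multiply $c/m < 1/d^k$ by $n+1>0$.

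\emph{Lower bound.} From $c/m \geq \frac{N}{N+1}\cdot\frac{1}{d^k}$ we get $y \geq \frac{(n+1)N}{(N+1)d^k}$. It then suffices that $(n+1)N \geq n(N+1)$, which simplifies to $N\geq n$.

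Multiplying the sandwich by $d^k$ gives $n \leq x < n+1$. Lemma~\ref{lemma:buzz2} then yields
\[
  \remainder(n,d^k)=\floor(\remainder(x,d^k)).
\]

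Next we express the right-hand side of the theorem in terms of $y$. Since $c(n+1)$ and $m$ are integers, $\remainder(cn+c,m)/m = y-\floor(y)$, the fractional part of $y$. Likewise $\remainder(x,d^k) = d^k\,(y-\floor(y))$. Write $r \coloneqq \remainder(cn+c,m)$ and $f \coloneqq y - \floor(y) = r/m$. Then
\[
  \division(r\cdot d, m) = \floor(d f) = \floor\!\left(\frac{d^k f}{d^{k-1}}\right) = \floor\!\left(\frac{\floor(d^k f)}{d^{k-1}}\right).
\]
The last equality is the standard nested-floor identity $\floor(z/q)=\floor(\floor(z)/q)$ for a real $z \geq 0$ and a positive integer $q$. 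By the previous paragraph, $\floor(d^k f)=\remainder(n,d^k)$, so the right-hand side equals $\division(\remainder(n,d^k), d^{k-1})$.

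Finally, we use the elementary digit identity
\[
  \division(\remainder(n,d^k),d^{k-1}) = \remainder(\division(n,d^{k-1}),d).
\]
To prove it, write $n = q d^k + r'$ with $0\le r'<d^k$, and then $r' = a d^{k-1}+b$ with $0\le a<d$ and $0\le b<d^{k-1}$. Then $\division(n,d^{k-1}) = qd + a$, whose remainder modulo $d$ is $a = \division(r',d^{k-1})$.

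The main obstacle is the lower half of the sandwich. It is the only place where the precise constant $(1-\frac{1}{N+1})$ and the restriction $n\le N$ are used. The "$+c$" offset (evaluating at $n+1$ rather than $n$) is exactly what lets a \emph{strictly smaller} approximation $c/m<1/d^k$ work. The remaining steps are bookkeeping: the fractional-part rewriting and the floor identities. The only care needed there is that $\remainder(x,d^k)$ is taken for a real $x$, and that $\remainder(x,d^k)=d^k\,\remainder(y,1)$.
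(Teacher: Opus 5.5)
Your proof is correct, but it applies Lemma~\ref{lemma:buzz2} at a different level than the paper does.

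The paper applies the lemma with divisor $d$, numerator $q=\division(n,d^{k-1})$ and real number $x=c(n+1)d/m$. This identifies the right-hand side directly as $\floor(\remainder(x,d))$, so no further identities are needed. The cost is that the hypothesis $q\le x<q+1$ depends on the block structure of $n$. The paper verifies it with a two-part argument: $n+1\le(q+1)d^{k-1}$ for the upper bound, and a block-by-block maximization of $q/(n+1)$ for the lower bound.

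You apply the lemma with divisor $d^k$ and numerator $n$ itself. Its hypothesis then becomes the plain fastmod sandwich $n/d^k\le c(n+1)/m<(n+1)/d^k$, and each side follows from the assumed interval in one line. Your price is two extra elementary steps to descend from $n\bmod d^k$ to a single digit: the nested-floor identity and the digit identity $\division(\remainder(n,d^k),d^{k-1})=\remainder(\division(n,d^{k-1}),d)$.

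Your route shows that the stated interval is exactly the condition for the sandwich to hold for every $n\in[0,N]$. It also gives more for free: $\floor(d^j r/m)=\division(\remainder(n,d^k),d^{k-j})$ for every $j\le k$, so you recover all leading digits of $n\bmod d^k$, not just one.

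Conversely, $[n/d^{k-1},(n+1)/d^{k-1})\subseteq[q,q+1)$, so the paper's per-$n$ requirement is weaker than yours. Its argument in fact tolerates a smaller lower endpoint, $\frac{t}{t+1}\cdot\frac{1}{d^k}$ with $t=\division(N,d^{k-1})\,d^{k-1}$. The stated theorem does not exploit this slack.
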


\begin{proof}
  Let $p = d^{k-1}$ and $D = d^k = d \cdot p$.
  For each $n \in [0, N]$, let $q = \division(n, p) = \floor(n / p)$.
  The goal is to prove
  \[
    \remainder(q, d) = \division\bigl(\remainder(c \cdot n + c, m) \cdot d, m\bigr).
  \]
  We simplify the right-hand side. By definition of $\remainder$,
  \[
    \remainder(c \cdot n + c, m) = c(n+1) - \division(c(n+1), m) \cdot m.
  \]
  Multiplying both sides by $d/m$ gives
  \[
    \remainder(c \cdot n + c, m) \cdot \frac{d}{m} = \frac{c(n+1)d}{m} - \division(c(n+1), m) \cdot d.
  \]
  Let $x = c(n+1)d/m$. The expression above becomes $x - \division(c(n+1), m) \cdot d$.
  Since $x/d = c(n+1)/m$, we have $\division(c(n+1), m) = \floor(x/d)$, and thus $x - \division(c(n+1), m) \cdot d = \remainder(x, d)$.
  Therefore,
  \[
    \division\bigl(\remainder(c \cdot n + c, m) \cdot d, m\bigr) = \floor\bigl(\remainder(x, d)\bigr).
  \]
  Thus we only need to show that $\remainder(q, d) = \floor\bigl(\remainder(x, d)\bigr)$ for every $n \in [0, N]$.
  By Lemma~\ref{lemma:buzz2} (applied with numerator $q$ and real number $x$), the two statements
  \[
    \remainder(q, d) = \floor\bigl(\remainder(x, d)\bigr)
    \qquad\text{and}\qquad
    \division(q, d) = \division(x, d)
  \]
  hold simultaneously if and only if $q \leq x < q+1$.
  In particular, the first equality (the one we need) holds when $q \leq c(n+1)d/m < q+1$, or equivalently
  \[
    \frac{q}{d(n+1)} \leq \frac{c}{m} < \frac{q+1}{d(n+1)}.
  \]
  It therefore suffices to show that the given interval for $c/m$,
  \[
    \left(1-\frac{1}{N+1}\right)\frac{1}{D} \leq \frac{c}{m} < \frac{1}{D},
  \]
  lies inside the required interval for every $n \in [0,N]$. This is equivalent to showing two uniform bounds:

  1. (Upper bound.) For every $n \in [0,N]$, we need to show that
     \[
     \frac{q+1}{d(n+1)} \geq \frac{1}{D}.
     \]
     Equivalently, $(q+1)/(n+1) \geq 1/p$. Write $n = qp + r'$ with $0 \leq r' < p$ (so $q = \floor(n/p)$). Then $n+1 \leq (q+1)p$, and therefore
     \[
     \frac{q+1}{n+1} \geq \frac{q+1}{(q+1)p} = \frac{1}{p},
     \]
     with equality precisely when the block for this $q$ reaches its natural end $n = (q+1)p-1$. (If the last block for $q_{\max} = \floor(N/p)$ is incomplete, the inequality is strict.) Multiplying by $1/d$ yields the claimed lower bound on the required upper bound for $c/m$.

  2. (Lower bound.) For every $n \in [0,N]$, we need to show that
     \[
     \frac{q}{d(n+1)} \leq \left(1-\frac{1}{N+1}\right)\frac{1}{D} = \frac{N}{d p (N+1)}.
     \]
     Equivalently,
     \[
     \frac{q}{n+1} \leq \frac{N}{p(N+1)}.
     \]
     Inside each block where $q$ is constant, the left-hand side is maximized at the smallest $n$ of the block, i.e., at $n = qp$ (where $n+1 = qp+1$). Thus, it is enough to bound the values
     \[
     \frac{q}{qp+1} = \frac{t/p}{t+1} = \frac{t}{p(t+1)}, \quad t = qp.
     \]
     The function $q \to q/(q+1)$ is strictly increasing for $q \geq 0$.
     The global maximum occurs in the last block: let $q_{\max} = \floor(N/p)$, $r = N \bmod p$, so $t = q_{\max}p = N-r$ with $0 \leq r < p$. Then the maximum value attained is
     \[
     \frac{t}{p(t+1)} = \frac{N-r}{p(N-r+1)}.
     \]
     Because $t = N-r \leq N$, we have
     \[
     \frac{t}{t+1} \leq \frac{N}{N+1} \implies \frac{t}{p(t+1)} \leq \frac{N}{p(N+1)}.
     \]
     Consequently every required lower bound on $c/m$ is at most the left endpoint of the given interval.

  Because the given interval for $c/m$ satisfies
  \[
  \max_n \frac{q}{d(n+1)} \leq \left(1-\frac{1}{N+1}\right)\frac{1}{D} \leq \frac{c}{m} < \frac{1}{D} \leq \min_n \frac{q+1}{d(n+1)},
  \]
  the inequality $q \leq x < q+1$ holds for every $n \in [0,N]$. By the one-sided implication of Lemma~\ref{lemma:buzz2} we obtain the desired remainder identity.
  This completes the proof.
\end{proof}

\section{Methods}%
\label{sec:methods}

This section presents our proposed SIMD integer-to-string conversion algorithm.
We begin with an overview of the overall design, followed by detailed
descriptions of the SIMD kernel, multiplicative division technique, and dynamic
variant-selection mechanism.

\subsection{Overview}%
\label{sec:overview}

Our approach decomposes 64-bit integers into two 8-digit chunks by dividing the
integers by $10^8$: $n/10^8$ and $n \bmod 10^8$. Optimizing compilers typically
convert a division by $10^8$ into a multiplication by $\lceil 2^{90}/10^8 \rceil$
followed by a right shift. The remainder can be computed by a multiplication of
the quotient by $10^8$ followed by a subtraction. Our approach then processes
each chunk in parallel using AVX-512 vector instructions to extract all eight
decimal digits simultaneously, and emits the results via masked or fixed-width
stores. The core innovation is a vectorized kernel (\S~\ref{sec:simdalgo}) that
replaces traditional repeated divisions with parallel multiplicative remainders
computed using AVX-512 IFMA instructions. We build two complete conversion
routines on this kernel: a heterogeneous variant (\S~\ref{sec:baseconv})
optimized for mixed digit lengths, and a homogeneous variant
(\S~\ref{sec:homogeneous_specialization}) specialized for uniform-length
batches. A lightweight profiling step (\S~\ref{sec:dynamic_selection})
automatically selects the best variant for each dataset, balancing branch
prediction accuracy and instruction-level parallelism.

\subsection{\xcapitalisewords{\ifmakernel{}}: Parallel Computation of Decimal Remainders and ASCII Packing}%
\label{sec:simdalgo}

The core of our algorithm is a vectorized kernel that converts a single integer
(up to 16 digits) into its ASCII decimal representation using a small number of
SIMD instructions. Rather than extracting digits one or two at a time through
repeated division, we compute all remainders $n \bmod 10^k$ for $k=1,\dots,8$ in
parallel within a single 512-bit register, then convert them to ASCII digits in
bulk. This parallelism is enabled by the Integer-Fused-Multiply-Add (IFMA)
instructions available in the AVX-512 instruction set
extension.\footnote{\url{https://en.wikichip.org/wiki/x86/avx512_ifma}} The IFMA
instructions allow us to compute multiplications and additions on 52-bit limbs
efficiently (each 64-bit lane carries a 52-bit multiplicand), which is ideal for
our purpose of extracting decimal digits.

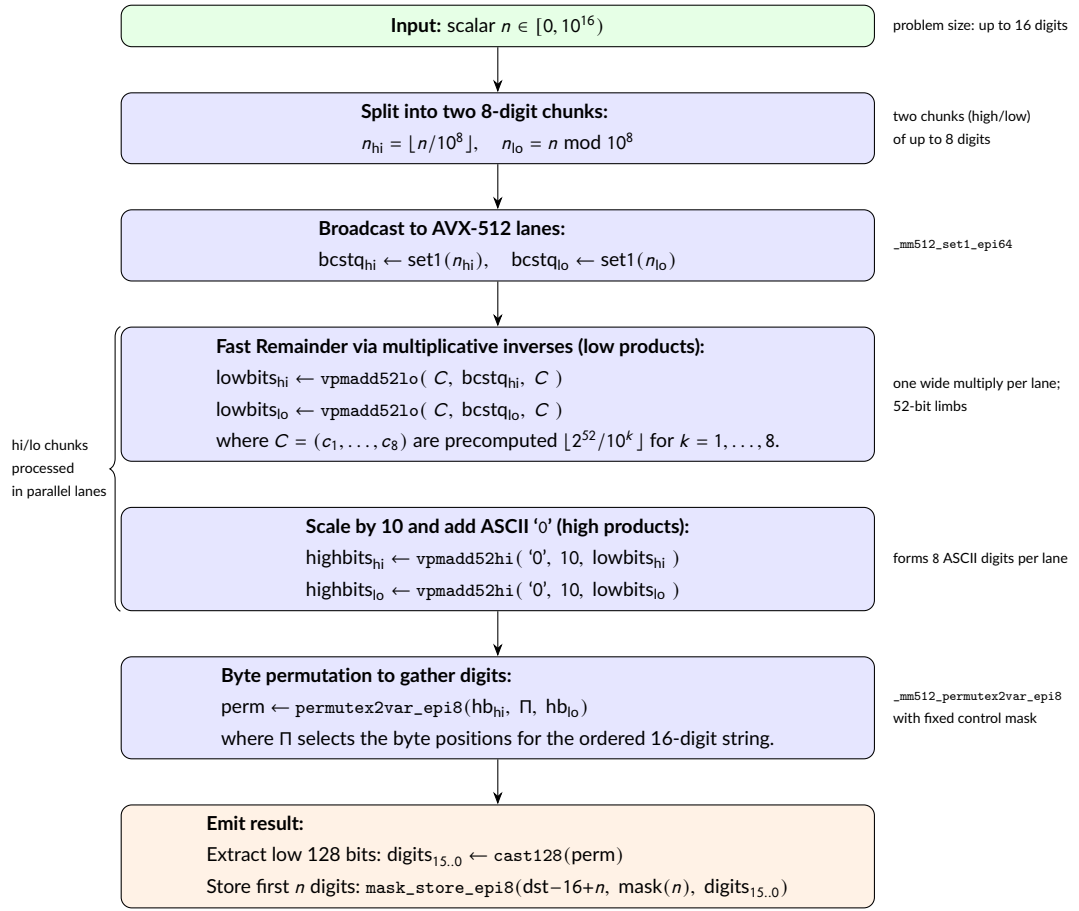
\begin{figure}[htbp]
  \centering
  \begin{tikzpicture}[
    >=Stealth,
    node distance=6mm,
    every node/.style={font=\small},
    box/.style={draw, rounded corners, align=left, inner sep=4pt, minimum width=10cm},
    note/.style={font=\scriptsize, align=left, inner sep=1pt},
    brace/.style={decorate, decoration={brace, amplitude=4pt}},
    boxinput/.style={box, fill=green!10},
    boxop/.style={box, fill=blue!10},
    boxstore/.style={box, fill=orange!10}
  ]
    \node[boxinput] (in) {%
        \textbf{Input:} scalar $n \in [0,10^{16})$%
      };
    \node[note, right=2mm of in] {problem size: up to 16 digits};
    \node[boxop, below=of in] (split) {%
        \textbf{Split into two 8-digit chunks:}\\
        $n_{\text{hi}} = \lfloor n / 10^{8} \rfloor,\quad
        n_{\text{lo}} = n \bmod 10^{8}$%
      };
    \node[note, right=2mm of split] {two chunks (high/low)\\of up to 8 digits};
    \draw[->] (in) -- (split);
    \node[boxop, below=of split] (bcst) {%
        \textbf{Broadcast to AVX-512 lanes:}\\
        $\mathrm{bcstq}_{\text{hi}} \gets \mathrm{set1}(n_{\text{hi}}),\quad
        \mathrm{bcstq}_{\text{lo}} \gets \mathrm{set1}(n_{\text{lo}})$%
      };
    \node[note, right=2mm of bcst] {\texttt{\_mm512\_set1\_epi64}};
    \draw[->] (split) -- (bcst);
    \node[boxop, below=of bcst] (lo) {%
        \textbf{Fast Remainder via multiplicative inverses (low products):}\\
        $\mathrm{lowbits}_{\text{hi}} \gets \texttt{vpmadd52lo}(\;C,\; \mathrm{bcstq}_{\text{hi}},\; C\;)$\\
        $\mathrm{lowbits}_{\text{lo}} \gets \texttt{vpmadd52lo}(\;C,\; \mathrm{bcstq}_{\text{lo}},\; C\;)$\\
        where $C = (c_1,\ldots,c_8)$ are precomputed $\lfloor 2^{52}/10^k\rfloor$ for $k=1,\ldots,8$.
      };
    \node[note, right=2mm of lo] {one wide multiply per lane;\\52-bit limbs};
    \draw[->] (bcst) -- (lo);
    \node[boxop, below=of lo] (hi) {%
        \textbf{Scale by 10 and add ASCII \charquote{0} (high products):}\\
        $\mathrm{highbits}_{\text{hi}} \gets \texttt{vpmadd52hi}(\;\charquote{0},\; \mathbf{10},\; \mathrm{lowbits}_{\text{hi}}\;)$\\
        $\mathrm{highbits}_{\text{lo}} \gets \texttt{vpmadd52hi}(\;\charquote{0},\; \mathbf{10},\; \mathrm{lowbits}_{\text{lo}}\;)$
      };
    \node[note, right=2mm of hi] {forms $8$ ASCII digits per lane};
    \draw[brace,decoration={brace, mirror}] let \p1=(lo.north west), \p2=(hi.south west), \n1={min(\x1,\x2)-0.1} in (\n1,\y1) -- (\n1,\y2) node[midway, left=4pt, anchor=east, note]{hi/lo chunks\\processed\\in parallel lanes};
    \node[boxop, below=of hi] (perm) {%
        \textbf{Byte permutation to gather digits:}\\
        $\mathrm{perm} \gets \texttt{permutex2var\_epi8}(\mathrm{hb}_{\text{hi}},\; \Pi,\; \mathrm{hb}_{\text{lo}})$\\
        where $\Pi$ selects the byte positions for the ordered 16-digit string.
      };
    \node[note, right=2mm of perm] {\texttt{\_mm512\_permutex2var\_epi8}\\with fixed control mask};
    \draw[->] (hi) -- (perm);
    \node[boxstore, below=of perm] (store) {%
        \textbf{Emit result:}\\
        Extract low 128 bits: $\mathrm{digits}_{15..0} \gets \texttt{cast128}(\mathrm{perm})$\\
        Store first $n$ digits: $\texttt{mask\_store\_epi8}(\mathrm{dst}{-}16{+}n,\;\mathrm{mask}(n),\;\mathrm{digits}_{15..0})$
      };
    \draw[->] (perm) -- (store);
  \end{tikzpicture}
  \caption{\ifmakernel{} for integer-to-string split into two 8-digit
  chunks, lane-parallel fastmod via \texttt{vpmadd52}, byte permutation to
assemble ordered digits, and masked store to the output buffer.}
  \label{fig:simd_kernel_visual}
\end{figure}

At its core, the kernel computes all remainders (aka residues) $n \bmod 10^k$
for $k=1, \dots, 8$ \emph{in parallel} and turns them into ASCII digits. In more
detail, our kernel first splits the input integer $n$---in $[0, 10^{16})$---into
two 8-digit chunks (high/low). Each chunk is then broadcast across eight lanes
and multiplied by precomputed constants $c_k = \lceil 2^{52} / 10^k \rceil$ (see
\S~\ref{sec:fastmod}). A second fused multiply-add is then used to scale by $10$
and add the ASCII offset \charquote{0}. Concretely, we use \texttt{VPMADD52LUQ}
to form ``low'' 52-bit products (one per lane), followed by \texttt{VPMADD52HUQ}
to compute $10 \cdot \mathrm{low} + \charquote{0}$. A single byte-wise lane
crossbar (\texttt{permutex2var\_epi8}) then gathers the eight digits of both
8-digit chunks together. Finally, a masked store can write only the significant
prefix to the output buffer when the input $n$ has fewer than 16 digits.
Figure~\ref{fig:simd_kernel_visual} shows the various stages of the computation.

\subsection{Fast Remainder via Multiplicative Inverses}%
\label{sec:fastmod}

\begin{figure}
\centering\small
\begin{subfigure}{0.48\linewidth}
\begin{lstlisting}
__m512i to_string_8digits(uint64_t n) {
  __m512i vn   = _mm512_set1_epi64(n);
  __m512i c = _mm512_setr_epi64(
   45035996, 450359962, 4503599627, 45035996273, 
   450359962737, 4503599627370, 45035996273704, 
   450359962737049);
  __m512i vten    = _mm512_set1_epi64(10);
  __m512i vzero = _mm512_set1_epi64('0');
  __m512i low  = _mm512_madd52lo_epu64(c, vn, c);
  return _mm512_madd52hi_epu64(vzero, vten, low);
}

\end{lstlisting}%
\caption{C function to compute the eight ASCII digits of $n < 10^8$\label{fig:eightdigitscpp}}
\end{subfigure}
\hfill
\begin{subfigure}{0.48\linewidth}
 \begin{lstlisting}
to_string_8digits(unsigned long):
 vmovdqa64     zmm1, zmmword ptr [rip+.L0]
 vpbroadcastq  zmm0, rdi
 vpmadd52luq   zmm1, zmm0, zmm1
 vpbroadcastq  zmm0, qword ptr [rip+.L1]
 vpmadd52huq   zmm0, zmm1, qword ptr[rip+.L2]{1to8}
 ret
\end{lstlisting}%
\caption{Equivalent assembly\label{fig:eightdigitsasm}}
\end{subfigure}
\caption{Side-by-side comparison of a C function to compute eight digits and its assembly equivalent}
\label{fig:eightdigits}
\end{figure}

The \ifmakernel{} from \S~\ref{sec:simdalgo} instantiates the fast computation
of decimal remainders based on Theorem~\ref{theorem:main} to compute the
multiplicative inverses needed for our SIMD kernel. We have that
$\remainder(\division(n, d^{k-1}),d) = \division(\remainder(c \cdot n +c, m) \cdot d, m)$
for all $n \in [0,N]$  if $\left(1-\frac{1}{N+1} \right) 1/d^k \leq c/m < 1/d^k$.
Setting $m = 2^{52}$ and $N = 10^8 - 1$, we can compute the constant $c$ as
$\lfloor 2^{52} / d^k \rfloor$. We can verify that for $d = 10^k$ with $k \in [1, 8]$,
the constant $c$ satisfies the required bounds, ensuring that the formula holds
for all $n \in [0, 10^8)$. See Figure~\ref{fig:mathcheck} for a Python script
that verifies these bounds for the relevant powers of ten. Specifically, we
have that $\remainder(\division(n, 10^{k-1}),10) =\division(\remainder(c \cdot n +c, m) \cdot 10 , m)$
for all $n\in [0,10^8)$ and $k \in [1, 8]$. In practice, we precompute these
constants $\lfloor 2^{52} / d^k \rfloor$ for eight powers of ten and use them in
our SIMD kernel to compute the remainders efficiently. The first step
$\remainder(c \cdot n +c, m)$ is computed using the \texttt{vpmadd52lo}
instruction, which performs the multiplication and addition in one step, while
the second step $\division(\remainder(c \cdot n +c, m) \cdot d , m)$ is computed
using the \texttt{vpmadd52hi} instruction, which scales by $10$ and extracts the
high 52~bits to yield the final digit values.

\begin{figure}
  \centering
  \begin{tabular}{l}
    \begin{lstlisting}[language=Python]
for k in range(1, 9):
  d = 10 ** k
  lhs = (10**8 - 1) * 2**52
  rhs = (2**52 // d) * d * 10**8
  assert lhs <= rhs
\end{lstlisting}
  \end{tabular}\restartrowcolors
  \caption{Python script to verify that the constants $\lfloor 2^{52} / d^k \rfloor$
  satisfy the bounds required by Theorem~\ref{theorem:main} for $d = 10^k$ and $N = 10^8 - 1$.}%
  \label{fig:mathcheck}
\end{figure}

Figure~\ref{fig:eightdigits} illustrates our main routine with a
code snippet and its equivalent assembly. It converts an integer $n < 10^8$ into its eight-digit ASCII string representation using AVX-512 instructions. The C function broadcasts the input $n$ across eight 64-bit lanes in the vector \texttt{vn}. It then utilizes a vector \texttt{c} containing precomputed constants approximating $\lfloor 2^{52} / 10^k \rfloor$ for $k = 8, \ldots, 1$. The computation of \texttt{low} via \texttt{\_mm512\_madd52lo\_epu64(c, vn, c)} adds the original constants to the low 52 bits of the products $n \cdot c$. Finally, \texttt{\_mm512\_madd52hi\_epu64(vzero, vten, low)} adds the ASCII value of `'0'` (broadcast in \texttt{vzero}) to the high 52 bits of $10 \cdot$\texttt{low}, producing the character codes for each digit in the respective lanes of the returned \texttt{\_\_m512i}. The accompanying assembly code is the compiled equivalent, loading the constant vector \texttt{c} into \texttt{zmm1} from memory, broadcasting $n$ into \texttt{zmm0}, computing the low fused multiply-add, then setting \texttt{zmm0} to a broadcast of `'0'` from \texttt{.L1}, and performing the high fused multiply-add with a broadcast of $10$ from \texttt{.L2}, resulting in the digit characters ready for packing or storage. Thus, not counting the return instruction, we require only five instructions to compute the eight-digit ASCII string.

\subsection{Base Conversion Routine}%
\label{sec:baseconv}

Building upon the \ifmakernel{} from \S~\ref{sec:simdalgo}, we
implement the complete integer-to-string conversion in the function
\texttt{avx512\_to\_chars}. This routine converts a 64-bit unsigned integer into
its decimal representation using one or two vectorized 8-digit blocks and a
scalar fallback for the final four digits. It relies on masked vector stores to
handle variable-length numbers without branching.
Figure~\ref{fig:avx_512_to_chars} presents a representative implementation of
this conversion routine in C++.

\begin{figure}[htb]
  \centering
  \begin{tabular}{l}
    \begin{lstlisting}
int avx512_to_chars(uint64_t value, char *const result) {
    const uint32_t n = fast_digit_count(value);

    // --- Main vectorized path for values below 10^16 ---
    if (value < 10000000000000000ULL) {
        const __m128i digits_15_0 = to_string_avx512ifma(value);
        const __mmask16 mask = (__mmask16)(0xFFFFu << (16 - n));
        _mm_mask_storeu_epi8(result - 16 + n, mask, digits_15_0);
        return n;
    }

    // --- Path for larger values: split into 16+4 digits ---
    const auto [q, r] = div10000(value); // q = value / 10^4, r = value % 10^4
    const uint32_t nq = n - 4; // number of digits in q (r has exactly 4 digits)
    const __m128i v16 = to_string_avx512ifma(q); // as illustrated in Fig. 3
    const __mmask16 mask = (__mmask16)(0xFFFFu << (16 - nq));
    _mm_mask_storeu_epi8(result - 16 + nq, mask, v16);
    write_four_digits_10000(result + nq, r);
    return n;
}
    \end{lstlisting}
  \end{tabular}\restartrowcolors
  \caption{Simplified heterogeneous conversion path using the IFMA kernel.
The full implementation also includes a subpath optimized for when the number
has maximum 8-digits that follows the same control logic.}%
  \label{fig:avx_512_to_chars}
\end{figure}

The algorithm first determines the number of digits using a fast branchless
length routine~\cite{lemire_counting_2025}. We first obtain the number of
leading zero bits via \texttt{std::countl\_zero(x)}, which serves as an index
into two precomputed static tables of size 65. The table \texttt{digits} stores
the candidate decimal length corresponding to each possible leading-zero count,
while another table stores the largest integer that still has one fewer digit
than that candidate length. A single comparison \texttt{x > low} then decides
whether the candidate must be incremented, yielding the final digit count.
Because the tables are fully static and the only operations are a leading-zero
count, two array lookups, and a comparison, the whole function executes in a
handful of cycles.

For numbers below $10^{16}$, the conversion fits within two
8-digit blocks and is handled entirely by the IFMA kernel (lines~5--10). For
larger values, the algorithm divides the input by $10^4$, converts the high 16
digits using the same SIMD path, and writes the final four digits with a small
scalar helper (\texttt{write\_four\_digits\_10000}, lines~13--19). The
mask-store operation (line~8~and~17) writes exactly $n$ bytes without
conditional branches, ensuring consistent throughput even when digit lengths
vary. The same control structure is reused for smaller inputs ($<10^8$) using an
8-digit variant of our SIMD kernel, analogous to the 16-digit one and omitted
here for brevity. This implementation serves as the baseline
(\emph{heterogeneous}) variant, balancing performance across heterogeneous
digit-length distributions.

The scalar helper \texttt{write\_four\_digits\_10000} prints the final four
digits by splitting them into two two-digit blocks. Like the table-based methods
described in Section~\ref{sec:related}, it performs a branch-free division
by~100 followed by lookups in a precomputed two-digit table. The standard
approach to this division uses a 64-bit fixed-point multiplier to compute the
quotient $q = (x \cdot \lceil 2^{64} / 100 \rceil) \divop 2^{64}$ and the
remainder $r = x - 100 \times q$. While this yields the exact pair $(q, r)$, it
requires a relatively costly 64×64$\to$128-bit multiplication. Instead, we use
an approximate method that avoids this cost: rather than computing the true
remainder, we derive a \emph{pseudo-remainder} using a 32-bit reciprocal and bit
shifts. This 8-bit pseudo-remainder takes one of 256 distinct possible values,
but each value uniquely corresponds to one of the true remainders in $[0,100)$.
By indexing into a 256-entry lookup table, we recover the correct two-digit
ASCII pair without requiring wide multiplications or subtractions.

\subsection{Homogeneous Specialization}%
\label{sec:homogeneous_specialization}

In practical workloads, integer-to-string conversion is rarely applied to a
single number. Instead, it is typically performed over a large batch of values.
The distribution of digit lengths within such batches can vary widely depending
on the application domain. For instance, in some cases, most numbers may have
the same number of digits (e.g., all eight-digit values), whereas in others the
numbers may span a wide range of lengths.

The previously described heterogeneous variant (\S~\ref{sec:baseconv}) is
designed to handle arbitrary digit-length distributions efficiently. However,
when the input batch is predominantly composed of numbers with the same digit
length, the conversion process can be further optimized by specializing the
routine. Instead of using masked stores to handle variable-length outputs, we
can emit fixed-size stores.

Masked and unmasked vector stores have identical \emph{throughput} on modern
x86~CPUs (e.g., both sustain one store per cycle on the Zen~4
microarchitecture), and although masked stores have higher latency, this
difference is hidden when converting numbers in batch.\footnote{See
\url{https://uops.info/html-instr/VMOVDQU_M128_XMM.html} and
\url{https://uops.info/html-instr/VMOVDQU8_M128_K_XMM.html} for details.}
However, a routine that handles variable-length outputs via masked stores must
also compute the mask at runtime, calculate the destination pointer offset, and
cannot benefit from fixed-size optimizations. These additional instructions add
measurable overhead compared to a routine that can assume a fixed output length
and use direct (unmasked) stores.

Our \emph{homogeneous} variant leverages this observation by assuming most
numbers in the batch share the same digit length. It contains more branches,
each optimized for a specific length. When the dataset is homogeneous, these
branches are consistently well predicted and thus nearly free. Once a path is
chosen, the body is straight-line code using direct (unmasked) stores with
precomputed offsets. Our measurements show that the variable-length
(heterogeneous) variant executes approximately 10--12\% more instructions per
digit, which translates into 30--35\% more cycles due to instruction-level
dependencies. This design suits common cases such as fixed-width identifiers,
Unix timestamps, or telemetry with uniform magnitude.

\begin{figure}[htb]
  \centering
  \begin{tabular}{l}
    \begin{lstlisting}
int avx512_to_chars_homogeneous_17_20(uint64_t value, char *const result) {
  // Split value: q = high digits, r = low 16 digits
  const auto [q, r] = digits::div10e16(value);

  // Write high part (1 to 4 digits) scalarly
  char *p = digits::write_one_two_three_or_four_digits_10000(result, q);

  // Convert remaining 16 digits using SIMD kernel and store directly
  const __m128i digits_15_0 = to_string_avx512ifma(r);
  _mm_storeu_si128(reinterpret_cast<__m128i*>(p), digits_15_0);
  return p - result + 16;  // Total characters = (digits of q) + 16
}
    \end{lstlisting}
  \end{tabular}\restartrowcolors
  \caption{Homogeneous specialization for 17--20-digit inputs. A dispatcher
  selects this path; within it, the 1--4-digit prefix branches resolve to a single
  arm in a homogeneous scenario, yielding excellent branch prediction.}%
  \label{fig:homogeneous_17_20}
\end{figure}

Figure~\ref{fig:homogeneous_17_20} shows the fixed path for 17--20-digit
numbers. A short dispatcher (not shown) branches on the total digit count to
enter this path. Inside the path, the only data-dependent control flow is the
1--4-digit microcase in \texttt{write\_one\_two\_three\_or\_four\_digits\_10000}.
For a fixed total length (17, 18, 19, or 20), the prefix \texttt{q} always has
1, 2, 3, or 4 digits, respectively, so the same arm is taken almost every time
in a homogeneous scenario, yielding excellent branch prediction. The 16-digit
remainder is emitted via a single call to our \ifmakernel{}.

\subsection{Dynamic Variant Selection}%
\label{sec:dynamic_selection}

We implement two specialized variants of our routine: a \emph{homogeneous}
variant (branch-heavy, excellent when most numbers share a fixed digit length)
and a \emph{heterogeneous} variant (branch-light, slightly more
instruction-heavy). The optimal choice depends on the input distribution: logs,
counters, or identifiers often cluster around a few magnitudes (e.g., mostly
seven- or eight-digit values), whereas numeric data extracted from files or
telemetry typically spans a broad range. To adapt automatically, we use a
lightweight \emph{dynamic selection} step. Before converting the full batch, we
sample a small fraction $\alpha$ of elements and estimate the histogram of digit
lengths using the same fast integer-length routine employed by our
\ifmakernel{}. Let $\rho_{\max}$ denote the top-1 (dominant) length ratio in the
sample. If $\rho_{\max}$ exceeds a threshold $\tau_{\text{hom}}$, we select
the homogeneous variant (with the corresponding fixed length); otherwise we
select the heterogeneous variant. Algorithm~\ref{alg:variant-selection}
details the procedure.

\begin{algorithm}[htb]
  \caption{Dynamic selection between homogeneous and heterogeneous variants}
  \label{alg:variant-selection}
  \begin{algorithmic}[1]
    \Require Iterable \textsc{data}; sampling rate $\alpha \in (0,1]$; homogeneity threshold $\tau_{\text{hom}} \in (0,1)$
    \State $m \gets \lceil \alpha \cdot \mathrm{length}(\textsc{data}) \rceil$
    \State $S \gets \mathrm{RandomSample}(\textsc{data}, m)$
    \State Initialize \textsc{counts}$[\ell] \gets 0$ for $\ell = 0 \dots 20$
    \For{$x \in S$}
      \State $\ell \gets \mathrm{digit\_length}(x)$  \Comment{same routine as in \ifmakernel{}}
      \State $\textsc{counts}[\ell] \gets \textsc{counts}[\ell] + 1$
    \EndFor
    \State $c_{\max} \gets \max_{\ell}\ \textsc{counts}[\ell]$
    \State $\rho_{\max} \gets c_{\max} / m$
    \If{$\rho_{\max} \ge \tau_{\text{hom}}$}
      \State \Return \textsc{Homogeneous}
    \Else
      \State \Return \textsc{Heterogeneous}
    \EndIf
  \end{algorithmic}
\end{algorithm}

The sampling rate $\alpha$ and threshold $\tau_{\text{hom}}$ are configurable
parameters. In our experiments, we use $\alpha = 0.01$ (1\%) and
$\tau_{\text{hom}} = 0.95$. The procedure runs in $\Theta(\alpha N)$ time and
$\Theta(1)$ extra memory (the histogram is bounded by the maximum digit length,
e.g., $\le 20$ bins for 64-bit input). Because $\alpha \ll 1$, the overhead is
negligible in practice; quantitative overheads are reported in our evaluation.

\section{Experiments}%
\label{sec:experiments}

We evaluate our proposed integer-to-string conversion algorithm on two main
aspects: (1) the benefits of dynamic variant selection
(\S~\ref{sec:dynamic_selection}), and (2) overall performance compared to the
most relevant existing methods. We begin by describing our experimental setup,
including hardware, datasets, and compared algorithms. We then present our
results and analysis.

\subsection{Systems}%

We conduct our experiments on a machine equipped with an AMD Ryzen~9900X CPU
(Zen~5 microarchitecture). The CPU runs at a base frequency of 4.4~GHz and the
machine has DDR5 (6000\,MT/s) memory. We compile our code using
\texttt{clang++}~21.1.6 with \texttt{-O3 -march=native} optimizations enabled.

We also ran all benchmarks with GCC (\texttt{g++}~15.2.1); while the relative ranking
of algorithms remains similar, absolute performance differs due to differences
in code generation between compilers. The complete \texttt{g++} results are provided online in
the supplementary materials.

\subsection{Data}%

We use both real-world datasets and synthetic numbers to evaluate the algorithms.
For the real-world datasets, We extract integer data from files commonly used
for benchmarking. These files are:
\begin{itemize}
  \item \emph{Twitter}: an export of data from Twitter's API, \num{2108}~numbers;
  \item \emph{CITM catalog}: a catalog of events that occurred in a venue part of CitM (Cité de la Musique) in Paris, 14\,392 numbers;
  \item \emph{StackOverflow}: a dump of Unix timestamps from StackOverflow posts, \num{17823525}~numbers;
  \item \emph{US patents}: patent IDs from  a US patent citation network; represents patents granted between 1975--1999, \num{16518948}~numbers.
\end{itemize}
Figure~\ref{fig:datasets-histograms} shows the distribution of integer lengths
in the first two datasets. We can see that the CITM catalog is more homogeneous,
with 92\% of the integers having 9 decimal digits. In contrast, the Twitter data
exhibits a more heterogeneous distribution, with significant proportions of
small integers (principaly 1--3 digits) as well as larger integers (10 and 18
digits). The other two datasets (StackOverflow timestamps and US patents) are
purely homogeneous, with respectively 10-digit and 7-digit integers only.

\begin{figure}
  \centering
  \begin{tikzpicture}
    \begin{groupplot}[
        group style={
          group size=2 by 1,
          horizontal sep=1.5cm
        },
        ybar,
        every axis plot/.append style={/pgf/bar width=3pt},
        width=0.49\linewidth,
        height=4cm,
        ymin=0,
        ymax=100,
        xticklabel style={font=\small},
        xlabel={Number of decimal digits},
        ymajorgrids
      ]

      \nextgroupplot[
        title={CITM catalog},
        ylabel={Percentage of integers (\%)},
        xtick={5,6,9,13}
      ]
      \addplot+[fill=blue!50, draw=blue] coordinates {
          (5,5.92)
          (6,0.38)
          (9,92.02)
          (13,1.69)
        };

      \nextgroupplot[
        title={Twitter},
        xtick={1,2,3,4,5,6,9,10,18,20},
      ]
      \addplot+[fill=red!50, draw=red] coordinates {
          (1,28.37)
          (2,17.12)
          (3,23.82)
          (4,4.98)
          (5,2.80)
          (6,0.33)
          (8,0.47)
          (9,1.95)
          (10,10.67)
          (18,9.35)
          (20,0.14)
        };
    \end{groupplot}
  \end{tikzpicture}
  \caption{Distribution of integer lengths, normalized to percentages. Left: CITM catalog. Right: Twitter.}
  \label{fig:datasets-histograms}
\end{figure}
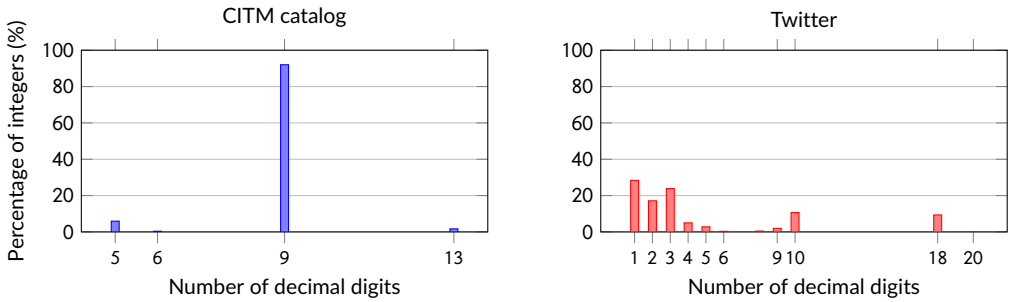

Synthetic datasets are generated to evaluate algorithmic behavior under
controlled digit-length distributions. We consider three synthetic scenarios:
\begin{itemize}
  \item \emph{uniform}: integers whose decimal digit lengths are sampled
    uniformly from 1 to 20;
  \item \emph{natural-8}: integers whose digit lengths are distributed over
    $[1,8]$ and strongly concentrated at 8 digits, with a strictly decreasing
    tail toward shorter representations;
  \item \emph{natural-16}: the same distributional scheme over $[1,16]$.
\end{itemize}
The two \emph{natural} datasets are strongly concentrated at the upper bound. In practice,
approximately 96.8\% of the generated integers have the maximum digit length.
These datasets are designed to mimic the digit-length distribution obtained when
sampling integers uniformly from a large interval $[1,10^k)$: most sampled
values have close to the maximum number of digits, while shorter representations
occur with exponentially decreasing frequency.

\subsection{Software Implementations}%

We benchmark a selection of C and C++ libraries converting 64-bit integers to
their decimal string representation. Our benchmarking code, synthetic data
generators, and datasets are all publicly available
online.\footnote{\url{https://github.com/fastfloat/int_serialization_benchmark}}
The benchmarked libraries and algorithms are the following:
\begin{itemize}
  \item \emph{Naive one-pass}: A baseline implementation based on
    Figure~\ref{fig:naive_listing}, augmented with the one-pass optimization
    where the number of digits is precomputed.
  \item \emph{Champagne--Lemire}: Our algorithm described in
    Section~\ref{sec:methods}. \emph{Unless otherwise specified, we use the version
    employing dynamic variant selection (which automatically chooses between the
    homogeneous and heterogeneous variants).}
  \item \emph{Abseil}: Google's Abseil implementation via the
    \texttt{FastIntToBuffer} function.\footnote{\url{https://github.com/abseil/abseil-cpp.git},
    git hash \texttt{d7aaad8} (April~2024).}
  \item \emph{jeaiii}: The \texttt{to\_text\_from\_integer} function from
    \texttt{jeaiii}'s \texttt{itoa} library.\footnote{\url{https://github.com/jeaiii/itoa/blob/c861d1c/include/itoa/jeaiii_to_text.h},
    git hash \texttt{c861d1c} (November~2022).}
  \item \emph{yy\_itoa}: The \texttt{itoa\_u64\_yy} function by Yaoyuan Guo.\footnote{\url{https://github.com/ibireme/c_numconv_benchmark/blob/8541662/src/itoa/itoa_yy.c},
    git hash \texttt{8541662} (August~2024).}
  \item \emph{AppNexus}: The \texttt{itoa\_u64\_an} function from the AppNexus
    Common Framework library.\footnote{\url{https://github.com/appnexus/acf/blob/5a4645d/src/an_itoa.c},
    git hash \texttt{5a4645d} (December~2017).}
  \item \emph{Hopman}: An implementation of the \texttt{hopman\_fast}
    algorithm.\footnote{\url{https://stackoverflow.com/a/4364057}} We reimplemented
    the algorithm to support 64-bit integers (the original only supports 32-bit
    inputs) and to match the interface used by our other implementations
    (accepting a buffer pointer and returning the output length).
  \item \emph{Mathisen}: An SSE4.1-based algorithm inspired by Mathisen's
    approach, based on the ``four groups of three digits'' variant described by
    the StackOverflow user \texttt{icecreamsword}. Our implementation decomposes
    the input into base-$10^9$ chunks, prints the most significant chunk without
    leading zeros, and formats the remaining chunks as fixed-width blocks using
    SIMD digit extraction. We could not include Mathisen's original code, as it
    does not appear to be publicly available and only supports 32-bit integers.
  \item \texttt{std::to\_chars}: The C++17 standard library integer-to-string conversion function.
\end{itemize}
All of these algorithms are described in more details in Section~\ref{sec:related}.

We exclude Mu\l{}a's SSE-based \texttt{utoa64\_sse}
algorithm~\cite{mula_sse_2011}\footnote{\url{https://github.com/WojciechMula/toys/blob/7731566/sse-utoa/sse64-intrin.c}}
from our benchmarks because its interface differs from the others: it may write
digits at arbitrary positions within the output buffer rather than starting at
the beginning. Adapting it to match our interface would require either copying
the result to the buffer start (underestimating performance) or using AVX-512
masked stores (making it no longer a pure SSE algorithm).

\subsection{Results -- Dynamic Variant Selection}%

We first assess whether the dual-variant strategy is worthwhile by measuring
(1)~the overhead of the selection mechanism itself and (2)~the performance
gains it unlocks. We focus on two representative synthetic distributions:
\emph{uniform} (highly heterogeneous, 1--20 digits) and \emph{natural-16}
(strongly homogeneous, 96.8\% at 16 digits). These extremes bracket the
spectrum of real-world digit-length distributions.

\subsubsection{Dynamic Selection Overhead}%

Table~\ref{tab:variant_selection_overhead} reports the absolute time spent in
Algorithm~\ref{alg:variant-selection} (sampling 1\% of the input) compared to
the total conversion time for datasets of varying sizes. The rightmost column
shows the ratio between conversion time and selection overhead.

\begin{table}
  \caption{Evaluating the variant-selection overhead compared to the entire
  integer-to-string conversion of homogeneous and heterogeneous datasets (g++).}%
  \label{tab:variant_selection_overhead}
  \centering
  \begin{tabular}{lrrrr}
    \toprule
    Model      & Size          & Variant Selection Time (ns) & Integer-to-string Time (ns) & Ratio   \\
    \midrule
    \rowcolor{black!10}
    uniform    &     100\,000  &      38                     &     479\,000                & 12\,605 \\
    \rowcolor{black!10}
               &  1\,000\,000  &     390                     &  4\,950\,000                & 12\,692 \\
    \rowcolor{black!10}
               & 10\,000\,000  &  3\,962                     & 49\,700\,000                & 12\,544 \\
    \rowcolor{white}
    natural-16 &     100\,000  &     110                     &     169\,000                &  1\,536 \\
    \rowcolor{white}
               &  1\,000\,000  &  1\,137                     &  1\,730\,000                &  1\,521 \\
    \rowcolor{white}
               & 10\,000\,000  & 11\,922                     & 17\,500\,000                &  1\,468 \\
    \bottomrule
  \end{tabular}\restartrowcolors
\end{table}

The overhead ratios range from approximately 1\,500 (natural-16) to 12\,500
(uniform), reflecting the fact that heterogeneous data incurs higher conversion
cost (due to branch mispredictions and variable-length handling) while selection
cost remains similar. In the worst case---homogeneous data, where selection
represents the largest \emph{relative} overhead---it consumes only 0.067\% of
total runtime. Because this ratio is nearly constant across dataset sizes, the
selection step scales linearly with negligible marginal cost. Consequently,
always enabling auto-selection incurs no significant penalty in practice, even
when the choice between variants ultimately has little impact.

\subsubsection{Performance Impact of Variant Choice}%

Having established that selection overhead is negligible, we  quantify the
performance benefit of choosing the right variant.
Table~\ref{tab:homogeneous_vs_heterogeneous} compares the per-integer
conversion time (ns/n) of the heterogeneous (\S~\ref{sec:baseconv}) and
homogeneous (\S~\ref{sec:homogeneous_specialization}) variants across both
synthetic and real-world datasets. The ``Selected'' column indicates which
variant the auto-detection heuristic selects for each dataset.

\begin{table}
  \caption{Performance comparison of homogeneous vs heterogeneous variants
  across different datasets. Times shown in nanoseconds per number (ns/n).
  The \textbf{bold} value indicates the faster variant for each dataset.
  The ``Selected'' column shows which variant was chosen by auto-detection.}%
  \label{tab:homogeneous_vs_heterogeneous}
  \centering
  \begin{tabular}{lrrr}
    \toprule
    Dataset & Homogeneous (ns/n) & Heterogeneous (ns/n) & Selected \\
    \midrule
    Uniform 4-digit (1M) & \textbf{0.73} & 0.75 & Homo \\
    Uniform 8-digit (1M) & 0.9 & \textbf{0.72} & Homo \\
    Uniform 16-digit (1M) & \textbf{0.93} & 1.19 & Homo \\
    Uniform 1-20 digits (1M) & 7.86 & \textbf{4.56} & Hetero \\
    Natural-8 (1M) & 1.08 & \textbf{0.73} & Homo \\
    Natural-16 (1M) & \textbf{1.07} & 1.2 & Homo \\
    CITM Catalog & 1.4 & \textbf{1.23} & Homo \\
    Twitter JSON & 1.03 & \textbf{0.93} & Hetero \\
    StackOverflow Timestamps & 1.28 & \textbf{1.26} & Homo \\
    US Patents & \textbf{0.95} & 0.96 & Homo \\
    \bottomrule
  \end{tabular}\restartrowcolors
\end{table}

On uniform 4-digit data, the homogeneous variant outperforms the heterogeneous
one (0.73~vs.~0.75~ns/n, 3\%). The gap widens for 16-digit data
(0.93~vs.~1.19~ns/n, 28\%), consistent with the reduced instruction count
described in \S~\ref{sec:homogeneous_specialization}. The auto-detection
heuristic correctly selects the homogeneous variant in both cases. Curiously, the 8-digit case
reverses this trend under Clang: the heterogeneous variant is faster
(0.72~vs.~0.90~ns/n). Performance counters reveal a compiler code-generation
effect: Clang's heterogeneous code has fewer branches (2~vs.~5) and more instructions per cycle
(6.6~vs.~5.4), overcoming the single extra instruction. The heuristic
misclassifies this case, incurring a 25\% penalty---and the predominantly
8-digit Natural-8 dataset suffers similarly (48\%). This anomaly is
compiler-specific: under GCC, the homogeneous variant wins (0.75~vs.~0.96~ns/n).
More broadly, GCC produces a slower heterogeneous variant across most datasets
(e.g., 1.03~vs.~0.75~ns/n on 4-digit), making the homogeneous variant
relatively more attractive.

Conversely, the highly mixed uniform 1--20 digit distribution strongly favors
the heterogeneous variant: 4.56~ns/n vs.~7.86~ns/n (72\% speedup). Branch
mispredictions in the homogeneous dispatcher outweigh any store optimization,
confirming that branch-light code is essential when digit lengths vary
unpredictably. The heuristic correctly identifies this case.

Real-world datasets show smaller differences: CITM catalog, Twitter,
StackOverflow, and US patents all exhibit gaps under 14\%. In these
cases, misclassification penalties remain modest (8--10\%).

These results validate the dual-variant design: substantial speedups (up to
72\%) when input characteristics clearly favor one implementation, with minimal
cost when both variants perform similarly. The adaptive strategy ensures robust
performance across diverse workloads without manual tuning.

\subsection{Results -- Performance Comparison of Tested Algorithms}

We now compare the overall performance of our proposed algorithm (with dynamic
variant selection enabled) against the other tested algorithms.

\subsubsection{Heterogeneous and Real-World Datasets}%

\begin{figure}[htp]
  \centering
  \includegraphics[width=1\linewidth]{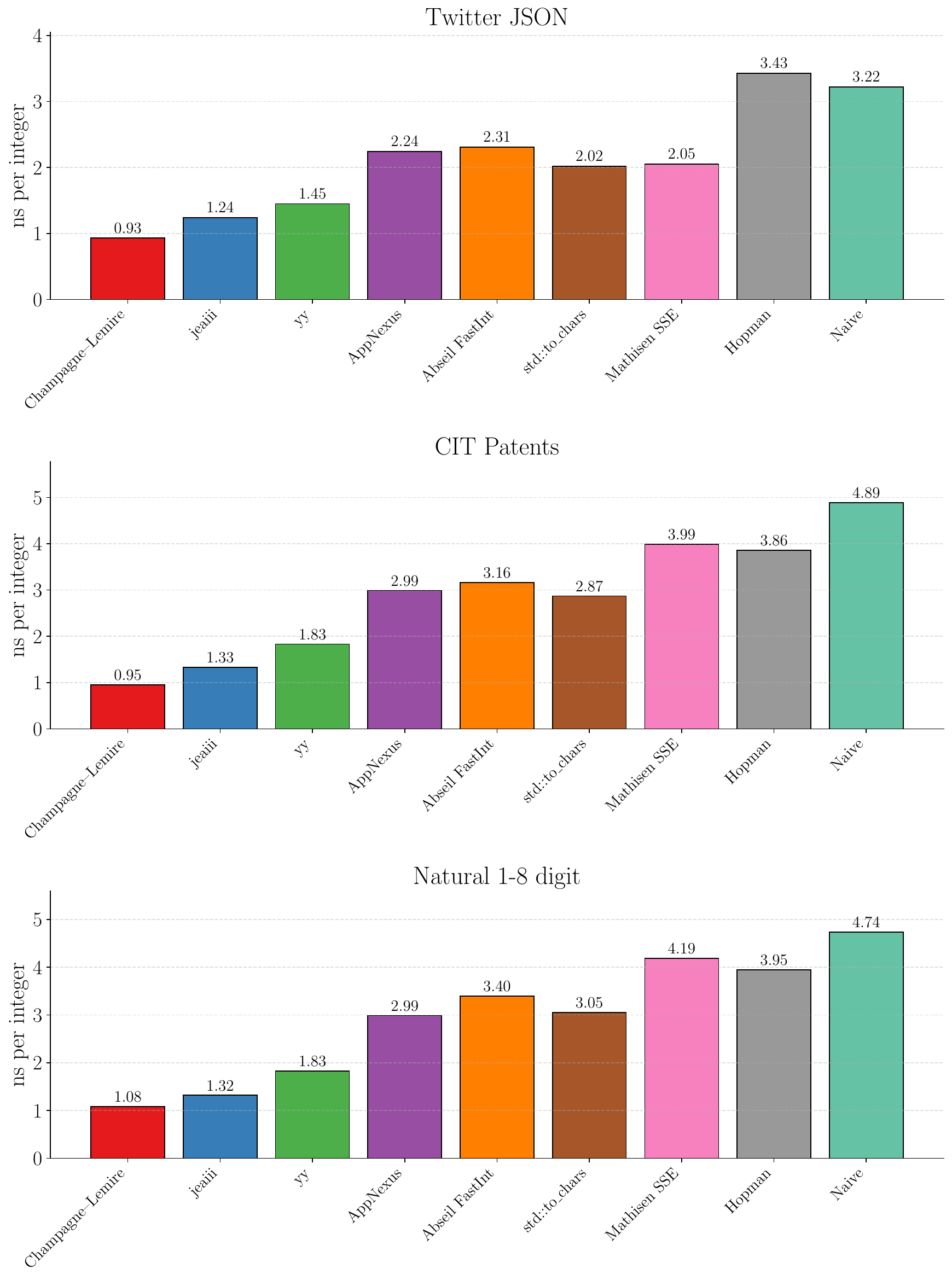}
  \caption{Performance comparison (ns/n, lower is better) of top algorithms
  on three real-world datasets. Our AVX-512 approach outperforms all competitors,
  achieving 1.4~GB/s throughput on Natural~1-8. Extended results with instruction
  and cycle counts are provided in the supplementary materials.}%
  \label{fig:algorithm_comparison}
\end{figure}

Figure~\ref{fig:algorithm_comparison} compares the performance of the most
competitive algorithms across three datasets: Twitter~JSON (variable-length,
69\% with $\le$3 digits), US~patents (homogeneous 7-digit identifiers), and
Natural~1-8 (uniform distribution over 1--8 digits). Our AVX-512 approach
consistently outperforms all competitors. On the Natural~1-8 dataset, we
achieve 0.73~ns/n, corresponding to a throughput of 1.4~GB/s of string
output---nearly twice the rate of the next best competitor (\texttt{jeaiii} at
0.75~GB/s). The performance gap is particularly pronounced on homogeneous
datasets: on US~patents, \texttt{jeaiii} is 39\% slower, and on Natural~1-8,
it is 81\% slower. Even on Twitter---where scalar approaches traditionally
excel due to the prevalence of small integers---our algorithm maintains a 33\%
advantage.

\subsubsection{Homogeneous Fixed Digit-Length Datasets}%

Fixed digit-length benchmarks are common, though they should
be interpreted with care: an algorithm that performs well on each fixed length
individually may still underperform on variable-length workloads due to branch
mispredictions or mode-switching overhead. Nevertheless, homogeneous datasets
do occur in practice---database identifiers, timestamps, and telemetry counters
often cluster around a single magnitude---and our previous experiments on
real-world datasets already demonstrate strong performance under variable-length
conditions. We therefore include fixed-length benchmarks to provide a complete
picture of per-length behavior.

Figure~\ref{fig:homogeneous_sizes} reports results across all digit lengths
(1 to 20 digits). For single-digit numbers, \texttt{std::to\_chars} is fastest
at 0.42~ns/n, compared to 1.25~ns/n for our algorithm. For 2-digit numbers,
\texttt{jeaiii} leads at 0.72~ns/n versus 1.08~ns/n for ours. At 3 digits we
are essentially tied with \texttt{jeaiii} (0.90~vs.~0.91~ns/n), but starting at
4 digits we pull decisively ahead and dominate all competitors across all
remaining sizes. The advantage is
particularly striking at 8 digits, where our algorithm achieves 0.72~ns/n---an
85\% speedup over \texttt{jeaiii} (1.33~ns/n). Beyond 10 digits,
\texttt{jeaiii}'s performance becomes erratic, with notable slowdowns at 10 and
18 digits. These results demonstrate that our algorithm's performance advantage
holds across the full spectrum of digit lengths, not just on specific real-world
distributions.

\begin{figure}[htp]
  \centering
  \includegraphics[width=1\linewidth]{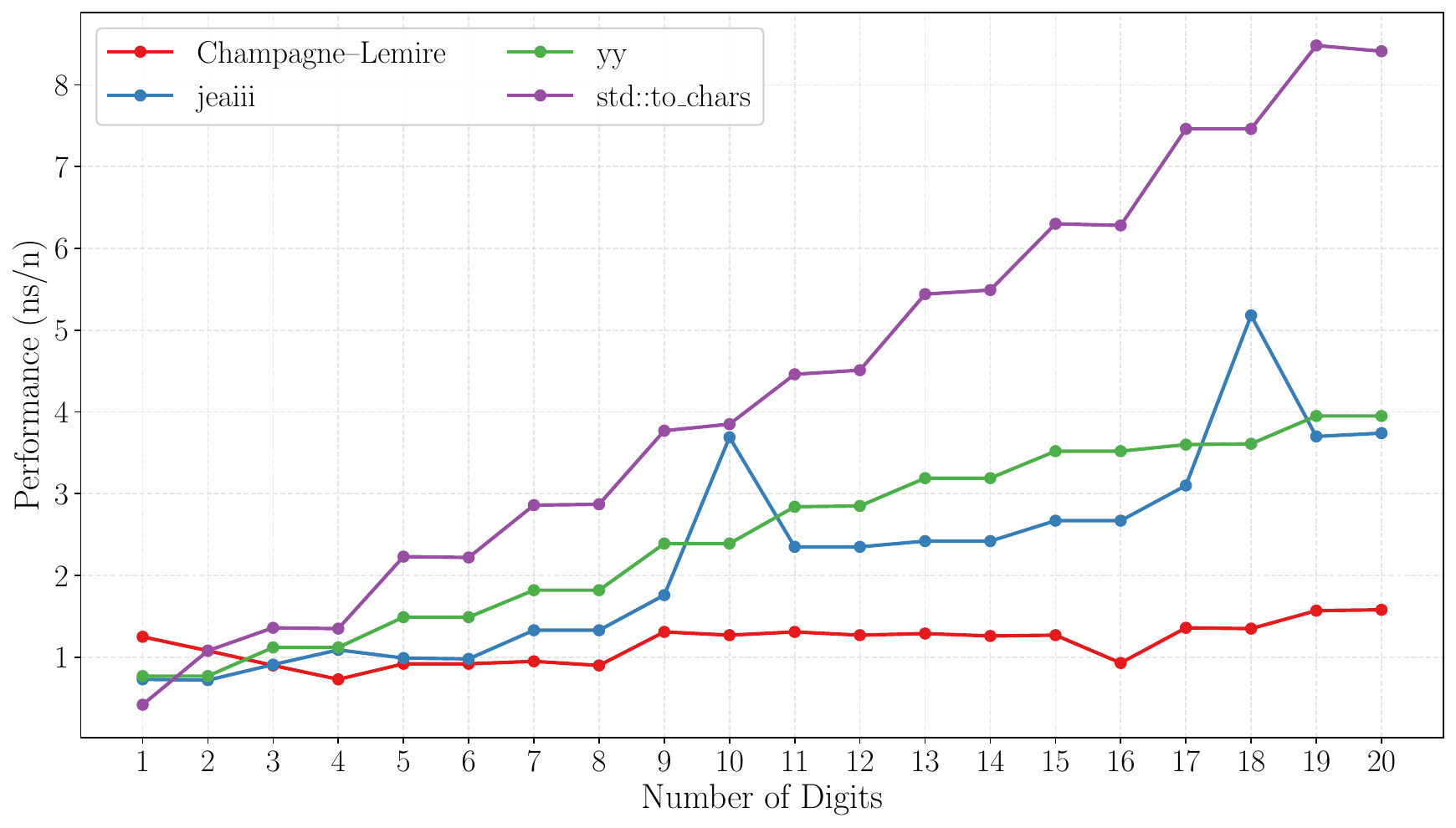}
  \caption{Performance across all digit lengths (1--20) on homogeneous datasets
  of 1M integers each. Our AVX-512 algorithm dominates from 3 digits onward. A
  complete comparison including all algorithms is available in the supplementary
  materials.}%
  \label{fig:homogeneous_sizes}
\end{figure}

\section{Conclusion}%
\label{sec:conclusion}

We presented the first AVX-512 integer-to-string conversion algorithm in the
literature. Our approach leverages AVX-512 IFMA instructions to extract multiple
digits in parallel, achieving high throughput across nearly all digit lengths.
To handle variable-length outputs efficiently, we designed both heterogeneous
(branch-light, masked stores) and homogeneous (branch-heavy, unmasked stores)
variants, along with a dynamic selection mechanism that adaptively chooses the
optimal implementation based on input characteristics. Our experiments
demonstrate that this dual-variant strategy delivers substantial speedups
(up to 72\%) on synthetic datasets with extreme digit-length distributions, while
imposing negligible overhead (0.067\% worst-case) on real-world data where both
variants perform similarly. Overall, our AVX-512 algorithm outperforms all
existing state-of-the-art methods across all tested datasets, running up to
twice as fast as the best competing scalar approach (\texttt{jeaiii}) and up to
four times faster than the standard library's \texttt{std::to\_chars}. This work
highlights the potential of SIMD techniques for accelerating integer-to-string
conversion and provides a robust, high-performance solution across diverse
workloads.

Two natural extensions of this work deserve further investigation. First, while
our implementation targets AVX-512 IFMA instructions, it would be valuable to
evaluate how the proposed SIMD-based approach maps to alternative vector
architectures, in particular ARM platforms supporting SVE or SVE2. Although the
available instruction sets differ, the underlying parallel digit extraction
strategy may transfer with suitable adaptations. Second, our study focuses on
converting one integer at a time; exploring SIMD-oriented approaches that
convert groups of integers in batch could enable a different class of
vectorization strategies, potentially improving throughput in data-parallel
serialization workloads.

Although our study focuses on integer-to-string conversion, the problem is
closely related to floating-point number formatting, which is inherently more
complex. In floating-point conversion, one must first determine the correct
decimal exponent and mantissa from the binary representation before producing
the textual output. The final stage of this process---writing the decimal
mantissa---is conceptually similar to integer-to-string conversion, but over
a smaller range (e.g., up to seventeen digits for IEEE~754 double-precision
values versus up to twenty digits for 64-bit unsigned integers). Additional
challenges include placing the decimal point, supporting both fixed and
scientific notations, and ensuring rounding correctness. While these aspects
require extra handling, the core operation---converting the decimal mantissa to
its textual form---shares the same computational bottlenecks as
integer-to-string conversion. Consequently, advances in integer conversion
techniques can accelerate part of the core of floating-point formatting
routines, with adaptations needed to address their specific requirements.

\section*{Author Contributions}

Jaël Champagne Gareau: conceptualization; investigation; software; experimentation; writing-review and editing.
Daniel Lemire: conceptualization; software; validation; experimentation; data analysis; writing-original draft; writing-review and editing.

\section*{Funding Information}

This work was supported by the Natural Sciences and Engineering Research Council of Canada, Grant Number: RGPIN-2024-03787.
The first author is supported by a postdoctoral grant from Fonds de recherche du Québec, \url{https://doi.org/10.69777/361128}.

\section*{Data Availability Statement}

All our data and software is freely available online. The C++ benchmarking
software, along with the datasets used, is available at
\url{https://github.com/fastfloat/int_serialization_benchmark}. The
supplementary materials referenced throughout this paper, including complete
performance tables and raw benchmark data, are available on the paper's webpage
at \url{https://www.jaelgareau.com/en/publication/gareau_lemire-spe26/}.

\renewcommand{\refname}{References}
\bibliography{ref}

\appendix
\newpage
\section{List of acronyms}

\begin{table}[htbp]
  \begin{tabular}{@{}ll@{}}
    \toprule
    \textbf{Acronym} & \textbf{Full form} \\
    \midrule
    ASCII            & American Standard Code for Information Interchange \\
    AVX              & Advanced Vector Extensions \\
    AVX-512          & Advanced Vector Extensions 512-bit \\
    AVX-512BW        & AVX-512 Byte and Word \\
    AVX-512CD        & AVX-512 Conflict Detection \\
    AVX-512DQ        & AVX-512 Doubleword and Quadword \\
    AVX-512F         & AVX-512 Foundation \\
    AVX-512IFMA      & AVX-512 Integer Fused Multiply-Add \\
    AVX-512VL        & AVX-512 Vector Length \\
    AVX2             & Advanced Vector Extensions 2 \\
    CSV              & Comma-Separated Values \\
    EVEX             & Enhanced Vector Extensions (AVX-512 prefix) \\
    FMA              & Fused Multiply-Add \\
    IFMA             & Integer Fused Multiply-Add \\
    ns               & nanoseconds \\
    ns/n             & nanoseconds per integer \\
    SIMD             & Single Instruction, Multiple Data \\
    SSE              & Streaming SIMD Extensions \\
    SSE2             & Streaming SIMD Extensions 2 \\
    SSE4.1           & Streaming SIMD Extensions 4.1 \\
    \bottomrule
  \end{tabular}
\end{table}

\end{document}